\newtheorem{lemma}{Lemma}
\theoremstyle{theorem}
\theoremstyle{definition}
\theoremstyle{plain}
\theoremstyle{plain}
\newcommand{\nvec}{{\bf{n}}}
\newcommand{\vvec}{{\bf{v}}}
\newcommand{\xvec}{{\bf{x}}}
\newcommand{\yvec}{{\bf{y}}}
\newcommand{\zvec}{{\bf{z}}}
\newcommand{\zerovec}{{\bf{0}}}
\newcommand{\Real}{\mathbb{R}}
\newcommand{\Expv}{{\mathbb{E}}}
\newcommand{\etab}{{\mbox{\boldmath $\eta$}}}
\newcommand{\phib}{{\mbox{\boldmath $\phi$}}}
\newcommand{\Ab}{\mathbf{A}}
\newcommand{\Ib}{\mathbf{I}}
\newcommand{\Wb}{\mathbf{W}}
\newcommand{\norm}{\vert \vert}
\newcommand{\Bnorm}{\Big\| }
\newcommand{\lzero}{$\ell_{0}$}
\newcommand{\lone}{$\ell_{1}$}
\newcommand{\ltwo}{$\ell_{2}$}
\newcommand{\lp}{$\ell_{p}$}
\newcommand{\SaS}{{$S\alpha S$ }}
\newcommand{\CMN}{{\mathop{\mathrm{CMN}}}}
\begin{document}

%
\title{Impulsive Noise Robust Sparse Recovery via Continuous Mixed Norm}

\author{Amirhossein~Javaheri,
Hadi~Zayyani,~\IEEEmembership{Member,~IEEE,}
Mario~A.~T.~Figueiredo,~\IEEEmembership{Fellow,~IEEE,}
        and~Farrokh~Marvasti,~\IEEEmembership{Life Senior Member,~IEEE}
\thanks{A.~Javaheri and F.~Marvasti are with the Department
of Electrical Engineering, Sharif University of Technology, Tehran, Iran (e-mail: javaheri\_amirhossein@ee.sharif.edu, marvasti@sharif.edu).}
\thanks{H.~Zayyani is with the Department
of Electrical and Computer Engineering, Qom University of Technology (QUT), Qom, Iran (e-mail: zayyani@qut.ac.ir).}
\thanks{M.~A.~T.~Figueiredo is with Instituto de Telecomuni\c{c}ac\~{o}es and Instituto Superior T\'{e}cnico, Lisboa, Portugal~(e-mail: mario.figueiredo@tecnico.ulisboa.pt).}
\vspace{-0.5cm}}

\maketitle
\thispagestyle{plain}
\pagestyle{plain}

\begin{abstract}
This paper investigates the problem of sparse signal recovery in the presence of additive impulsive noise. The heavy-tailed impulsive noise is well modelled with stable distributions. Since there is no explicit formulation for the probability density function of \SaS distribution, alternative approximations like Generalized Gaussian Distribution (GGD) are used which impose \lp-norm fidelity on the residual error. In this paper, we exploit a Continuous Mixed Norm (CMN) for robust sparse recovery instead of \lp-norm.  We show that in blind conditions, i.e., in case where the parameters of noise distribution are unknown, incorporating CMN can lead to near optimal recovery. We apply Alternating Direction Method of Multipliers  (ADMM) for solving the problem induced by utilizing CMN for robust sparse recovery. In this approach, CMN is replaced with a surrogate function and Majorization-Minimization technique is incorporated to solve the problem. Simulation results confirm the efficiency of the proposed method compared to some recent algorithms in the literature for impulsive noise robust sparse recovery.

\end{abstract}

\begin{IEEEkeywords}
Impulsive noise, symmetric $\alpha$-stable distribution, robust sparse recovery, continuous mixed norm 
\end{IEEEkeywords}

%
\IEEEpeerreviewmaketitle

\section{Introduction}
\label{sec:Intro}
\IEEEPARstart{I}{n} a Compressive Sensing (CS) problem, the objective is to reconstruct a sparse signal from compressed or lower-dimensional linear measurements. This has found many applications in various fields in signal processing within the last decade \cite{survey15}. In simple words, a CS problem is solved by optimizing a cost function, which in case of unconstrained optimization approach, is generally comprised of two terms. The first term penalizes the residual error signal and \ltwo-norm is usually used for this purpose. This is an optimal choice in the presence of AWGN noise. The second term is used to promote sparsity constraint on the representation coefficients and \lzero-norm \cite{IHT_Surr} or \lone-norm \cite{ChenDS99} is often employed for this regularization. However, sometimes, there exists some heavy-tailed impulsive noise in the measurements which degrades the performance of CS reconstruction. Hence, there are a class of robust sparse recovery algorithms introduced. In \cite{Stud12}, the impulsive noise is treated as a sparse vector and a joint sparse recovery method is proposed to reconstruct the original signal. In \cite{Lore13}, an iterative hard thresholding algorithm is suggested based on Lorentzian-norm as the fidelity criterion. Another algorithm introduced in \cite{Yang11}, applies Alternating Direction Method of Multipliers (ADMM) to solve the \lone-norm minimization problem. In addition to the popular \ltwo-norm fidelity, \lp-norm is incorporated for robust sparse recovery, specifically in case of impulsive noise corruption. Robust greedy pursuit algorithms based on \lp-correlation in \lp-space are also devised in \cite{Zeng16}. Recently, an \lp-\lone~optimization approach named Lp-ADM is proposed in \cite{Wen17}, in which \lp-norm is used as penalty on the residual error  and \lone-norm is employed for sparsity regularization. There are, in addition, some robust Bayesian sparse recovery algorithms proposed in \cite{Ji13} and \cite{Shang15}.
 
In this paper, we use the Continuous Mixed Norm (CMN) proposed in \cite{Zayy14} as the error penalty function. The CMN mixes all \lp-norms within the range ($p_s\le p\le p_f$) by weighted integration over $p$.
 Since the closed form PDF, modelling the impulsive noise with \SaS distribution does not exist, alternative PDFs are employed for the approximation of this heavy-tailed distribution. In \cite{Wen17}, a zero-mean Generalized Gaussian Distribution (GGD) with shape parameter $p$ ($0<p<2$) is used to model the PDF of impulsive noise which leads to an \lp-\lone~minimization algorithm. The shape parameter is chosen to be equal to $\alpha$ which requires this parameter to be known. In  this paper, we consider the case where the distribution parameters are unknown. Treating these parameters as unobserved latent variables, an Expectation Maximization (EM) method is used for progressively optimal estimation of the measurement signal corrupted with impulsive noise. This results in an optimization problem in which CMN is used as the fidelity criterion.
To solve the problem corresponding to this  formulation, we use ADMM to spilt the problem into simpler sub-problems. We also apply Majorization-Minimization (MM) technique proposed in \cite{Figu07}. The advantage of using CMN compared to \lp-norm (as in Lp-ADM) is that it needs not to fine tune the shape parameter $p$ and thus achieves better performance in blind settings.

\section{Problem Formulation}
\label{sec:ProblemForm}

The problem explored in this paper is to recover a sparse signal given its linear random measurements corrupted with impulsive noise. Suppose $\xvec\in \Real^n$ is the original sparse signal and $\Ab \in \Real ^{m\times n}$ is the measurement matrix. The signal of linear measurements $ \Ab \xvec$, is then, added with impulsive noise signal modelled with   symmetric $\alpha$-stable i.i.d. components, i.e., each component is a 
 random variable denoted by $N$  having \SaS distribution ($\beta=0$)  with the scale parameter $\gamma$ and the location parameter $\delta=0$. We assume the PDF for this distribution is approximated with GGD as follows:
\begin{equation}
\label{eq_pdf_appr}
\!\!N \sim {\cal S} (\alpha, 0, \gamma,0), \quad f_N(n)\approx \frac{\alpha}{2\sigma_n \Gamma\left(\frac{1}{\alpha}\right)} \exp\left(- \frac{\vert n \vert ^\alpha}{\sigma_n^\alpha}\right)
\end{equation}
where  $\sigma_n$ is a constant (function of $\gamma$) and $\Gamma$ denotes the gamma function. 

The objective is to recover $\xvec$, given the noisy measurements $\yvec= \Ab\xvec+\nvec \in \Real ^{m}$.   
If the problem is blind, the parameters of the \SaS distribution modelling the impulsive noise are unknown. If we treat these unknown parameters, denoted by $\Theta$, as unobserved latent variables, we can apply EM algorithm to find the original sparse signal. The EM algorithm \cite{EM_book} is an iterative method comprising of Expectation (E) and Maximization (M) steps in each iteration, i.e. we have: 
\begin{align}  
\label{eq_EM_general}
\text{E-step:}\quad  Q(\xvec \vert \xvec^{(t)}) &=\Expv_{\Theta\vert \yvec, \xvec^{(t)}} \left[\log \left(f(\yvec, \Theta \vert \xvec) f(\xvec)\right) \right] \nonumber \\
\text{M-step:}\qquad\,\: \xvec^{(t+1)} &=\mathop{\mathrm{argmax}}_{\xvec} Q(\xvec \vert \xvec^{(t)})
\end{align}
For $\xvec^{(t+1)}$ obtained from the M-step \eqref{eq_EM_general}, it can be proven that $f(\yvec \vert \xvec^{(t+1)}) f(\xvec^{(t+1)})>f(\yvec \vert \xvec^{(t)}) f(\xvec^{(t)})$. So the algorithm gradually maximizes the posterior probability $f(\xvec \vert \yvec)$ in each iteration.
Now, we expand $\Expv_{\Theta\vert \yvec, \xvec^{(t)}} [\log f(\yvec, \Theta \vert \xvec) ]$ as follows:
\begin{align}
\label{eq_Exp_EM}
\!\!\!\Expv_{\Theta\vert \yvec, \xvec^{(t)}} [\log f(\yvec, \Theta \vert \xvec) ] &= \int_{\Theta} f(\Theta\vert \yvec, \xvec^{(t)}) \log f(\yvec, \Theta \vert \xvec)  \mathrm{d}\Theta \nonumber \\
&= \underbrace{\int_{\Theta} f(\Theta\vert \yvec, \xvec^{(t)}) \log f(\yvec\vert \Theta , \xvec)  \mathrm{d}\Theta }_{{E_1}} \nonumber \\
& +   \underbrace{ \int_{\Theta} f(\Theta\vert \yvec, \xvec^{(t)}) \log f( \Theta\vert \xvec ) \mathrm{d}\Theta }_{{E_2}}
\end{align}
where the latter equality is implied using the Bayes' rule. Now, assume the unknown noise parameters ($\Theta$) are independent from $\xvec$. Therefore, $ f( \Theta\vert \xvec )=f(\Theta)$ and the term $E_2$  may be discarded while maximizing $Q(\xvec \vert \xvec^{(t)})$ over $\xvec$.
Now, assume $\sigma_n$ and $\gamma$ are given and $\alpha$ is the only unknown parameter. Thus $\Theta = \alpha$ and the probability density $f(\yvec \vert \Theta, \xvec)$ is equal to $f(\yvec\vert \alpha , \xvec)$.
Now, using \eqref{eq_pdf_appr} for i.i.d. components of noise signal, we may write:
\begin{align*}
f(\yvec\vert \alpha , \xvec)  = \prod_{i=1}^m f_N(n_i\vert \alpha) =\left(\frac{\alpha}{2\sigma_n \Gamma\left(\frac{1}{\alpha}\right)}\right)^{\!\!m} \!\! \exp \left( -\frac{\norm \nvec \norm_\alpha^\alpha}{\sigma_n^\alpha} \right)    
\end{align*}
where $\nvec = \yvec - \Ab \xvec$. Therefore, the optimization problem in M-step \eqref{eq_EM_general} is finally reduced to:
 
\begin{align}
\label{eq_EM_reduced}
\xvec^{(t+1)} &=\mathop{\mathrm{argmax}}_{\xvec}\, Q(\xvec \vert \xvec^{(t)}) =E_1+E_2+\Expv_{\Theta\vert \yvec, \xvec^{(t)}} [\log f(\xvec)]  \nonumber\\
&\equiv \mathop{\mathrm{argmax}}_{\xvec}\, -\!\! \int_{\alpha} f(\alpha\vert \yvec, \xvec^{(t)})  \frac{\norm \yvec - \Ab \xvec\norm_\alpha^\alpha}{\sigma_n^\alpha} \mathrm{d}\alpha  +\log f(\xvec)
\end{align}


\vspace{-3ex}
\section{Proposed Algorithm}
\label{sec:proposed_Al}
It is difficult to obtain a formulation for $f(\alpha\vert \yvec, \xvec^{(t)})$, but if we ignore $\yvec$ and $\xvec^{(t)}$ while integrating over $\alpha$, we can replace this PDF with a general function of $\alpha$ denoted by $\lambda(\alpha)$. Also ignoring the iteration index $(t)$, problem \eqref{eq_EM_reduced} is transformed into a general iteration-independent optimization problem as follows:
 \begin{align}
 \label{eq_opt_general}
\hat{\xvec} = \mathop{\mathrm{argmax}}_{\xvec}\, -\!\! \int_{\alpha} \lambda(\alpha) \frac{\norm \yvec -\Ab \xvec \norm_\alpha^\alpha}{\sigma_n^\alpha}\mathrm{d}\alpha + \log f(\xvec)
 \end{align}
Now, suppose $\lambda(\alpha)$ has a support within the range $\alpha\in [p_s,p_f]$. If we also assume $\xvec$ has a Laplacian prior, it implies $\log f(\xvec)=-\mu \norm \xvec \norm_1$. Hence, problem \eqref{eq_opt_general} may be restated as:
\begin{align}
\label{eq_ML}
\hat{\xvec} = \mathop{\mathrm{argmin}}_{\xvec}\, C(\xvec)=  \ell\left(\frac{\yvec-\Ab\xvec}{\sigma_n}\right)+ \mu \norm \xvec \norm_1
\end{align}
where $\ell(\vvec) =\CMN(\vvec) = \int_{p_s}^{p_f} \lambda(p) \norm \vvec \norm_p^p \mathrm{d}p$ is what we call the continuous mixed norm \cite{Zayy14} of vector $\vvec$. In fact, we have shown that problem \eqref{eq_EM_reduced}, under  assumptions discussed in the beginning of this section, is equivalent to a sparse recovery problem in which, CMN is incorporated as fidelity criterion.
Now, inserting the auxiliary variable $\zvec = \frac{1}{\sigma_n}(\Ab\xvec - \yvec)$ into problem \eqref{eq_ML}, we can solve the following multivariate optimization problem, equivalently:
\begin{align}
\label{eq_prob_error_cmn}
\min_{\xvec ,\zvec} C(\xvec,\zvec) =\ell(\zvec)+ \mu \norm \xvec \norm_1\quad s.t. \quad \zvec = \frac{\Ab\xvec - \yvec}{\sigma_n}
\end{align}
Using the Augmented Lagrangian Method (ALM), problem \eqref{eq_prob_error_cmn} is transformed into the following unconstrained optimization problem ($\sigma>0$):
\begin{align}
\min_{\xvec ,\zvec} C^{\cal L} (\xvec ,\zvec,\etab) = \ell(\zvec)+ \mu \norm \xvec \norm_1&+  \etab^T\left( \frac{\Ab\xvec - \yvec}{\sigma_n} - \zvec \right) \nonumber \\
&+\frac{\sigma}{2} \Bnorm\frac{\Ab\xvec - \yvec}{\sigma_n} - \zvec \Bnorm ^2
\end{align}
Applying ADMM \cite{BoydADMM} to solve the this problem, we obtain an iterative solution with the following alternating steps:

\vspace{-2ex}
\subsection*{$\zvec$ update step}
The optimization problem  associating with this step is: 
\begin{align}
\label{eq_z_upd}
\zvec^{(k+1)}  &= \mathop{\mathrm{argmin}}_{\zvec} C^{\cal L} (\xvec^{(k)} ,\zvec,\etab^{(k)})  \nonumber \\
&\equiv \mathop{\mathrm{argmin}}_{\zvec} \ell(\zvec) + \frac{\sigma}{2} \Bnorm\frac{\Ab \xvec^{(k)}-\yvec}{\sigma_n}- \zvec + \frac{\etab^{(k)}}{\sigma} \Bnorm ^2
\end{align}

The solution to \eqref{eq_z_upd} is found  by applying Majorization Minimization (MM). In other words, we iteratively  optimize a set of surrogate functions instead of $\ell(\zvec)$ at each iteration. 
\begin{lemma}
\label{Lemma1}
For any $0<p\leq q$ and any real  $z_i'$, the function $s(z_i,z_i')= \vert z_i \vert ^{q} \left( \frac{p}{q}\vert z_i' \vert ^{p-q}\right)+ \left(1-\frac{p}{q}\right)\vert z_i'\vert^p$ is a surrogate function for $\vert z_i\vert ^p$ w.r.t $z_i$.
\end{lemma}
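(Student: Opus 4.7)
The plan is to verify the two defining conditions of a majorization-minimization surrogate: (i) the tangency condition $s(z_i', z_i') = |z_i'|^p$ and (ii) the majorization inequality $s(z_i, z_i') \geq |z_i|^p$ for every real $z_i$. Condition (i) is an immediate substitution: plugging $z_i = z_i'$ gives $\tfrac{p}{q}|z_i'|^{q}|z_i'|^{p-q} + (1-\tfrac{p}{q})|z_i'|^p = |z_i'|^p$, so no work is needed there.

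For the majorization, the natural first move is to reduce to a scalar one-variable inequality. Writing $u = |z_i|$, $u' = |z_i'|$, and dividing through by $(u')^p$ (assuming $u' \neq 0$), the claim becomes
\begin{equation*}
\lambda t + (1-\lambda) \;\geq\; t^{\lambda}, \qquad t := (u/u')^q \geq 0, \quad \lambda := p/q \in (0,1].
\end{equation*}
This is exactly the tangent-line bound for the concave map $t \mapsto t^{\lambda}$ at the base point $t = 1$: since $t^\lambda$ is concave on $[0,\infty)$ when $\lambda \in (0,1]$, it lies below its tangent line $1 + \lambda(t-1) = \lambda t + (1-\lambda)$. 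I would present this as the one analytical step of the proof, noting that equality holds iff $t = 1$, i.e.\ iff $|z_i| = |z_i'|$, which is consistent with the tangency in (i). Equivalently one could invoke Young's inequality or weighted AM-GM, but the concavity argument is the shortest.

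The only delicate point, which I would flag explicitly, is the edge case $z_i' = 0$ when $p < q$: the factor $|z_i'|^{p-q}$ blows up and $s(z_i, 0)$ has to be interpreted as $+\infty$ for $z_i \neq 0$ and as $0$ at $z_i = 0$; the majorization and tangency statements then hold trivially. With that caveat handled, the two properties above establish the surrogate claim, and this is the main obstacle — everything else is bookkeeping.
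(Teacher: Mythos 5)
Your proof is correct, but it takes a genuinely different route from the paper's. The paper disposes of the lemma in one sentence: it defines the gap $s(z_i,z_i')-\vert z_i\vert^p$ and asserts that its nonnegativity follows by examining first and second order derivatives with respect to $z_i$ (i.e., the gap has a stationary point at $\vert z_i\vert=\vert z_i'\vert$ where it vanishes, and is convex-ish enough there to be a global minimum). You instead normalize by $\vert z_i'\vert^p$ and reduce everything to the single scalar inequality $\lambda t+(1-\lambda)\geq t^{\lambda}$ for $t\geq 0$, $\lambda=p/q\in(0,1]$, which is the tangent-line bound for the concave map $t\mapsto t^{\lambda}$ at $t=1$ (equivalently weighted AM--GM or Young's inequality). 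Your route buys several things: it avoids the case analysis and sign-checking that a derivative argument of $\vert z_i\vert^q$ requires near $z_i=0$ when $q\leq 1$, it makes the equality case transparent ($\vert z_i\vert=\vert z_i'\vert$, consistent with the tangency condition and with the paper's later claim that $\ell_S(\zvec,\zvec')>\ell(\zvec)$ off the diagonal), and it explicitly flags the $z_i'=0$ singularity of the factor $\vert z_i'\vert^{p-q}$ --- a point the paper silently defers to the $\epsilon$-regularization in its Remark rather than addressing in the lemma itself. The paper's derivative check is the more routine calculus verification; yours is shorter once the reduction is made and is arguably the more complete argument.
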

\begin{proof}
The proof is simple investigating $s(z_i,z_i')-\vert z_i \vert^p\geq 0$ using first and second order derivatives with respect to $z_i$.
\end{proof}

Assume $q \geq p_f$. Using lemma \ref{Lemma1}, we may write:
\begin{align}
\ell(\zvec) &= \sum_i \int_{p_s}^{p_f} \lambda(p) \vert z_i\vert^p  \mathrm{d}p \nonumber \\
& \leq \sum_i \int_{p_s}^{p_f}\! \lambda(p) \! \left( \vert z_i \vert ^{q} \left( \frac{p}{q}\vert z_i' \vert ^{p-q}\right)+ \left(1-\frac{p}{q}\right)\vert z_i'\vert^p \right) \! \mathrm{d}p \nonumber \\
&= \sum_i  \vert z_i \vert ^{q} \phi_{p_s,p_f}^{(q)}(z_i') + \sum_i \psi_{p_s,p_f}^{(q)}(z_i') = \ell_S(\zvec, \zvec')
\end{align}
The term $\ell_S(\zvec, \zvec')$ is a surrogate function for $\ell(\zvec)$ (note that $\ell_S(\zvec, \zvec)=\ell(\zvec)$ and $\ell_S(\zvec, \zvec') > \ell(\zvec)$ for any $\zvec\neq \zvec'$). If we assume uniform distribution for $\lambda(p)$, we obtain:
\begin{equation}
\label{eq_phi}
\phi_{p_s,p_f}^{(q)}(z_i') = \dfrac{\vert z_i' \vert ^{p_f}\left(p_f \log\vert z_i'\vert -1\right) -\vert z_i' \vert ^{p_s}\left(p_s \log\vert z_i' \vert -1\right)}{(p_f-p_s)q \vert z_i' \vert ^q \log^2\vert z_i' \vert}
\end{equation}

Now let $\zvec' = \zvec^{(k)}$; using MM technique, it suffices to solve the following optimization problem at each iteration:

\begin{align}
\label{eq_z_upd_un}
\zvec^{(k+1)}  &= \mathop{\mathrm{argmin}}_{\zvec} C^{\cal L}_S (\xvec^{(k)} ,\zvec,\etab^{k})   \\
&= \mathop{\mathrm{argmin}}_{\zvec} \ell_S(\zvec,\zvec^{(k)}) + \frac{\sigma}{2} \Bnorm\frac{\Ab \xvec^{(k)}-\yvec}{\sigma_n}- \zvec + \frac{\etab^{(k)}}{\sigma} \Bnorm ^2 \nonumber
\end{align}

 Hence, depending on the value of $q$, we will deal with the following problems:
 
I.) {$q=1$ and $p_s<p_f\leq1$}:
In this case we have:
\begin{align}
\ell_S(\zvec,\zvec^{(k)}) = \sum_i \vert z_i \vert \phi_{p_s,p_f}^{(1)} (z_i^{(k)}) + \sum_i \psi_{p_s,p_f}^{(1)} (z_i^{(k)}) 
\end{align}
Now, substituting $\phi_{p_s,p_f}^{(1)}(z_i^{(k)})$ from \eqref{eq_phi} and discarding the constant term $\sum_i  \psi_{p_s,p_f}^{(1)} (z_i^{(k)}) $, problem \eqref{eq_z_upd_un} is transformed into an \lone~minimization problem with respect to $\zvec$ where the solution is obtained via soft-thresholding operator ${\cal S}_T$ \cite{Daub04}:
\begin{align}
\label{eq_z_upd_l1}
\!\!\!\!\!\!\!\zvec^{(k+1)} ={\cal S}_{{T^{(k)}}}\left( \frac{\Ab \xvec^{(k)}-\yvec }{\sigma_n}+ \frac{\etab^{(k)}}{\sigma}\right)
\end{align}
and $T^{(k)} \!=\! \frac{ 1}{\sigma}{\phib}_{p_s,p_f}^{(1)}\!(\zvec^{(k)}\!) \!=\! \frac{ 1}{\sigma}[\phi_{p_s,p_f}^{(1)}\!(\!z_1^{(k)}\!),  \cdots ,\phi_{p_s,p_f}^{(1)}\!(\!z_m^{(k)}\!) ]^T$.

II.) {$q=2$ and $p_s<p_f\leq2$}:
The choice of $q=2$ results in quadratic formulation for $\ell_S(\zvec,\zvec^{(k)}) $. Thus, substituting $\phi_{p_s,p_f}^{(2)}(z_i^{(k)})$ from \eqref{eq_phi}, we can restate problem \eqref{eq_z_upd_un} as follows:
\begin{align*}
\zvec^{(k+1)} =\mathop{\mathrm{argmin}}_{\zvec} \zvec^T \Wb(\zvec^{(k)}) \zvec + \frac{\sigma}{2} \Bnorm\frac{\Ab \xvec^{(k)}-\yvec}{\sigma_n} - \zvec + \frac{\etab^{(k)}}{\sigma} \Bnorm ^2
\end{align*}
where $\Wb(\zvec^{(k)})=\mathrm{diag}(\phib_{p_s,p_f}^{(2)}(\zvec^{(k)}))$. This problem has a closed form solution obtained by:

\begin{align}
\label{eq_z_upd_l2}
\!\!\zvec^{(k+1)}  &= \left( \Ib+\frac{2}{\sigma} \Wb(\zvec^{(k)}) \right)^{-1} \left(  \frac{\Ab \xvec^{(k)}-\yvec}{\sigma_n} + \frac{\etab^{(k)}}{\sigma}\right)
\end{align}
Since the matrix $ \Ib+2/\sigma \Wb(\zvec^{(k)})$ is diagonal, this inverse solution is easily obtained with element-wise scalar divisions.

\vspace{-1em}
\subsection*{$\xvec$ update step}
This step has an optimization problem formulated as:
\begin{align}
\xvec^{(k+1)}  &= \mathop{\mathrm{argmin}}_{\xvec} C^{\cal L}_S (\xvec ,\zvec^{(k+1)},\etab^{(k)})  \\
& =  \mathop{\mathrm{argmin}}_{\xvec}  \frac{\sigma}{2} \Bnorm\frac{\Ab \xvec-\yvec}{\sigma_n}  - \zvec ^{(k+1)}+ \frac{\etab^{(k)}}{\sigma} \Bnorm ^2+\mu \norm \xvec \norm_1 \nonumber
\end{align}
solving this LASSO problem using MM with $\lambda_0 > \norm \Ab \norm^2/{\sigma_n^2}$ yields \cite{Daub04}:
\begin{align}
\label{eq_x_upd}
\xvec^{(k+1)}\!=  {\cal S}_{\!\frac{\mu}{\sigma \lambda_0}}\!\! \left(\!\xvec^{(k)}\!-\frac{1}{\lambda_0 \sigma_n}\Ab^{\!T} \! \left(\!\frac{\Ab \xvec^{(k)}-\yvec}{\sigma_n} -\zvec^{(k+1)} \!+\!\frac{\etab^{(k)}}{\sigma} \!\right)\!\! \right)
\end{align}

\vspace{-2em}
\subsection*{Multiplier update step}
Finally we have the update formula for $\etab$ which is:
\begin{equation}
\label{eq_eta_upd}
\etab^{(k+1)} = \etab^{(k)} + \sigma \left(\frac{\Ab \xvec^{(k+1)}-\yvec}{\sigma_n} - \zvec^{(k+1)}\right)
\end{equation}

\vspace{-1em}
\subsection*{Remark}
The weight function $\phi_{p_s,p_f}^{(q)}(z_i')$ given in equation \eqref{eq_phi} is undefined when $\vert z_i'\vert =0$. To deal with this problem, a common alternative (\cite{Chartrand08}) is to  consider a small regularizing parameter $\epsilon$ in the definition of CMN, i.e., problem \eqref{eq_prob_error_cmn} is restated as:
\begin{align*}
\min_{\xvec , \zvec} \int_{p_s}^{p_f} \lambda(p) \sum_i ( \vert z_i \vert +\epsilon)^p \mathrm{d}p+\mu \norm \xvec \norm_1\quad s.t. \quad \zvec =\frac{\Ab\xvec - \yvec}{\sigma_n}
\end{align*}  
The derivation for the $\epsilon$-regularized problem with approximation, is similar to what is obtained in $\zvec$-update step in section \ref{sec:proposed_Al}, except that we let $\vert z_i^{(k)} \vert \leftarrow\vert z_i^{(k)} \vert +\epsilon $ in the computation of $\phi_{p_s,p_f}^{(q)}(z_i^{(k)})$. 
In addition, to fasten the rate of convergence of the algorithm, we apply continuation method on the regularizing parameter $\mu$ (as proposed in \cite{Figu09}) with a minimum threshold denoted by $\mu_{\min}$.
Hence, modified steps of the  proposed algorithm named CMN-ALM, are given in Alg. \ref{Algorithm_1}. The stopping criterion, decides when to stop the algorithm. For this purpose, we choose a maximum tolerance for primal or dual residual error.

\begin{algorithm}[!h]
\caption{Proposed robust CS algorithm (CNM-ALM)}
\textbf{Set}   $0\leq p_s<p_f\leq q\leq 2,\quad \mu, \sigma, \lambda_0,\zeta>0, \quad  \mu_{\min}, \epsilon \ll 1$. \newline
\textbf{Initialize} $\etab^{(0)}=\zerovec$, $\xvec^{(0)}=\zerovec$, $\zvec^{(0)}=-\yvec$, $k=0$.
\label{Algorithm_1}
\begin{algorithmic}

\REPEAT
\STATE Update $\zvec^{(k+1)}$ using \eqref{eq_z_upd_l1} or \eqref{eq_z_upd_l2} (with $\vert z_i^{(k)} \vert \leftarrow\vert z_i^{(k)} \vert +\epsilon $)
\STATE Update $\xvec^{(k+1)}$ using \eqref{eq_x_upd}
\STATE Update $\etab^{(k+1)}$ using  \eqref{eq_eta_upd} 
\STATE Update $\mu \leftarrow \max\{\zeta \mu, \mu_{\min}\}$ and  $k \leftarrow k+1$
\UNTIL {A stopping criterion is reached}
\end{algorithmic}
\end{algorithm}
\vspace{-2ex}
%
%
%

\section{Simulation Results}
\label{sec:Simulation}
In this section, we conduct experiments to compare the reconstruction quality performance of our proposed method with some state of the art algorithms for robust sparse recovery. In particular, we use Lp-ADM, YALL1 \cite{YALL1}, BP-SEP \cite{BPSEP} and Huber-FISTA \cite{HFISTA}\footnote{The codes for these algorithms are available at \href{https://github.com/FWen/Lp-Robust-CS.git}{https://github.com/FWen/Lp-Robust-CS.git}}.  Based on different choices for $p_s$, $p_f$ and $q$, we obtain different versions of our proposed algorithm. In the following experiments we run 3 versions with $(p_s=0, p_f=1, q=1)$,  $(p_s=0, p_f=1, q=2)$ and $(p_s=0, p_f=2, q=2)$. We choose $\sigma_n=1$ and the stopping criterion is set to primal or dual error tolerance of $1e-5$ with 100 maximum iterations. The experiments and results are categorized into the following sub-sections:
\subsubsection{Average preference ratio}
\label{subsec:Succ_Rate}
 In this scenario, a set of $N_t$ $k$-sparse vectors of size $n\times 1$ are randomly generated where $N_t=60$, $n=128$ and $k=7$. For each vector $\xvec_i$, the elements of the support set are chosen uniformly at random and the values are generated according to normal distribution. This model is indeed used as approximation of the Laplacian prior for the original signal. Each sample vector is then compressed via a random Gaussian measurement matrix $\Ab_i$ of dimensions $m\times n$. We choose $m=50$ (nearly 0.4 sampling rate) and the measurement vector $\yvec_i  = \Ab_i \xvec_i $ is then added with \SaS noise with distribution parameters $\alpha$ and $\gamma$. These values are chosen from $\alpha\in \{ 0.5, 1, 1.5\}$ and $\gamma\in \{ 1e-4, 1e-3, 1e-2, 1e-1\}$.  The noisy observed vector is finally given to robust sparse recovery algorithms to obtain an estimate $\hat{\xvec}_i$.  We eventually calculate SNR performance and average the results over random turns (random $\xvec_i$ and $\Ab_i$). For this part, we only compare our results with Lp-ADM algorithm. We choose $\sigma=1$, $\mu_{\min}=5e-1$, $\zeta=0.95$, $\epsilon=1e-2$ and $\lambda_0 = 2$.  We have used the source code for Lp-ADM with default parameters. Since the problem is blind, we do not know the best choice of $p$ (\lp~norm) applied in Lp-ADM algorithm. To compare the performance of the algorithms, we plot the average SNR of Lp-ADM versus $p$. We then calculate the \emph{preference ratio}, defined as the percentage of the region $p \in (0,2)$ where the SNR curves corresponding to different versions of our proposed algorithm (these curves are constant lines versus $p$) lie above the Lp-ADM curve. These regions for each curve are highlighted in the preference diagram depicted in Fig. \ref{fig_ver_p}. In fact, this experiment demonstrates the probability that our proposed algorithms outperform Lp-ADM when the problem is blind and $p$ is chosen uniformly at random.
 Furthermore, in a blind problem, neither the original sparse signal nor the parameter $\gamma$ is known. Thus, the regularization path for specifying $\mu$ as proposed in \cite{Wen17} is not applicable.  In this case, we choose $\mu = \xi \norm \Ab^T \xvec \norm_\infty$ (as  in \cite{Figu09}) with $\xi =0.1$ for all algorithms.
Table \ref{Table_1} shows these preference ratios for different values of \SaS noise parameters. According to this table, the version of the proposed algorithm with $(p_s=0, p_f=1, q=1)$ has best performance in blind settings. Even the worst case corresponding to the version $(p_s=0, p_f=2, q=2)$ which fails at $\alpha=0.5$, outperforms Lp-ADM most of the times especially for $\alpha>0.5$ and $\gamma>1e-4$.

\begin{table}[!h]
\vspace*{-1ex}
\caption{Preference ratio (in \%) of different versions of the proposed algorithm, as defined in section \ref{subsec:Succ_Rate}, compared to Lp-ADM. The values are reported in 3-tuples corresponding to $(p_s=0, p_f=1, q=1)$,  $(p_s=0, p_f=1, q=2)$ and $(p_s=0, p_f=2, q=2)$ versions, respectively. }
\centering
\hspace*{-.3cm} \begin{tabular}{l||p{1.5cm}|p{1.5cm}|p{1.5cm}|p{1.5cm}} \hline
& $\gamma = 1e-4$  & $\gamma = 1e-3$  & $\gamma = 1e-2$  &$\gamma = 1e-1$  \\ \hline \hline 
  $\alpha=0.5$ &100\, 100\, 0 &100\, 100\, 0\, &100\, 100\, 0 &100\, 0\hspace{2ex} 0 \\ \hline 
 $\alpha=1$ &87\hspace{2ex} 0\hspace{2ex}  0 &63\hspace{2ex} 87\hspace{1ex} 100 &100\, 100\, 100 &100\, 100 0 \\ \hline 
 $\alpha=1.5$ &100\, 100\, 100 &100\, 100\, 100  &57\hspace{2ex} 78\hspace{1ex} 100 &100\, 100\, 100\\ \hline 
\end{tabular}
\label{Table_1}
\end{table}

\begin{figure}[!t]
\vspace{-7ex}
\centering
\includegraphics[width=1.9in, trim={1.8in 3.2in 2in 3in},clip]{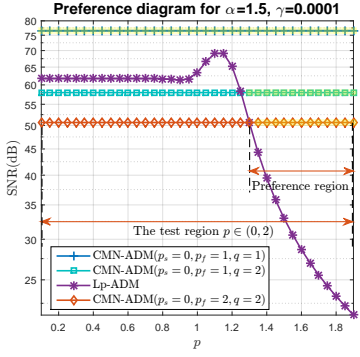}
\caption{The preference diagram of SNR versus $p$ (GGD approximation shape parameter) depicting the preference region for robust CS reconstruction (experiment \ref{subsec:Succ_Rate}). The parameters of $S\alpha S$ noise are $\alpha = 1.5,\, \gamma=1e-4$. \vspace{-2ex}}
\label{fig_ver_p}
\end{figure}

\subsubsection{The influence of noise power}
\label{subsec:Noise_level}
In this part, we would like to examine to effect of noise power on the performance of robust sparse recovery algorithms. The settings and  the parameters for the algorithms are just similar to the previous scenario except that $\alpha$ is fixed in this experiment and the SNR performance is depicted versus the parameter $\gamma$, which in some sense specifies the additive noise power. For Lp-ADM we choose $p\in \{0.5, 1, 1.5\}$ and the results are compared with those of our proposed methods as well as YALL1, BP-SEP and Huber-FISTA. Fig. \ref{fig_noise_1} shows the SNR performance of robust CS algorithms in \SaS noise versus the scale parameter $\gamma$. In Fig. \ref{fig_noise_1} we have chosen $\alpha=0.5$ and Fig. \ref{fig_noise_2} depicts the results for $\alpha=1$. As shown in these figures, the proposed algorithms clearly outperform competing algorithms specifically for $\alpha=0.5$.

\begin{figure}[!t]
\vspace{-8.1ex}
\centering
\hspace*{-1em} \subfloat[\SaS noise with $\alpha = 0.5$]{\includegraphics[width=2.1in]{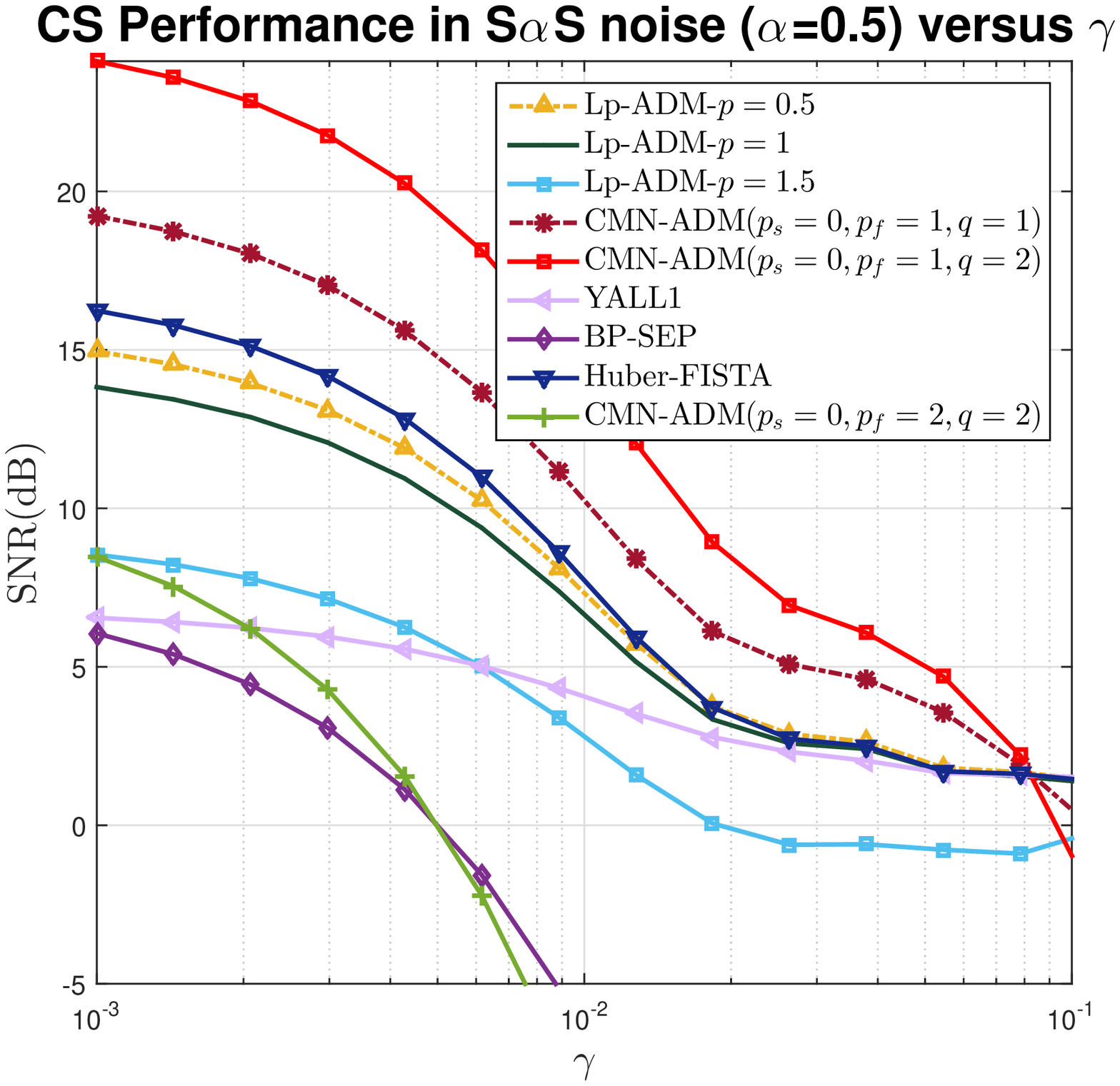}\label{fig_noise_1} } \hspace*{-1.4em} \subfloat[\SaS noise with  $\alpha = 1$]{\includegraphics[width=2.1in]{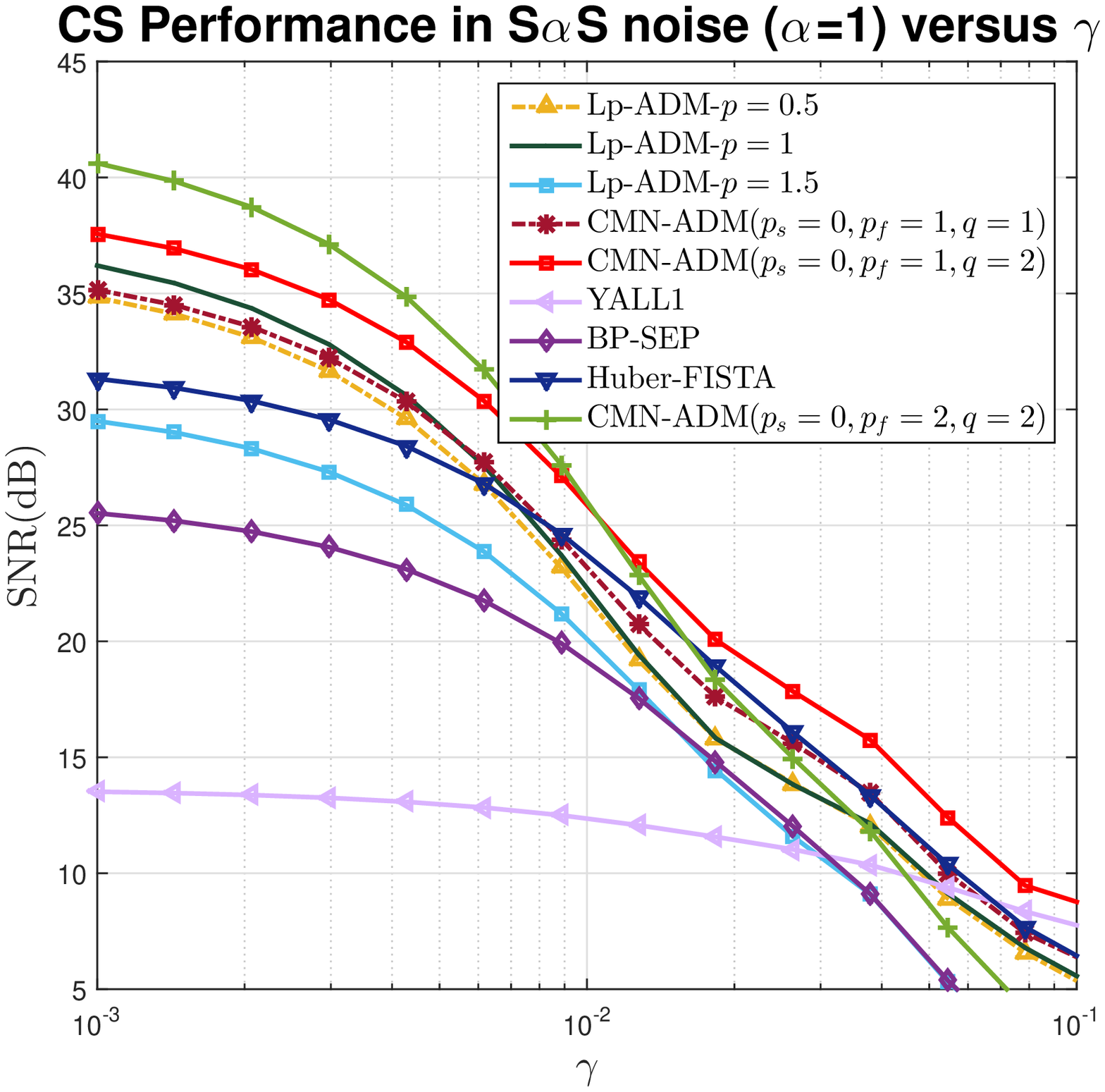}\label{fig_noise_2}  }
\caption{SNR performance of robust sparse recovery algorithms versus $\gamma$ (the scale parameter of $S \alpha S$ impulsive noise) (experiment \ref{subsec:Noise_level}).} 
\end{figure}

\begin{figure}[!t]
\vspace{-3ex}
\centering
\hspace*{-1em}  \subfloat[\SaS noise,  $\alpha = 1$, $\gamma=1e-3$]{\includegraphics[width=2.1in]{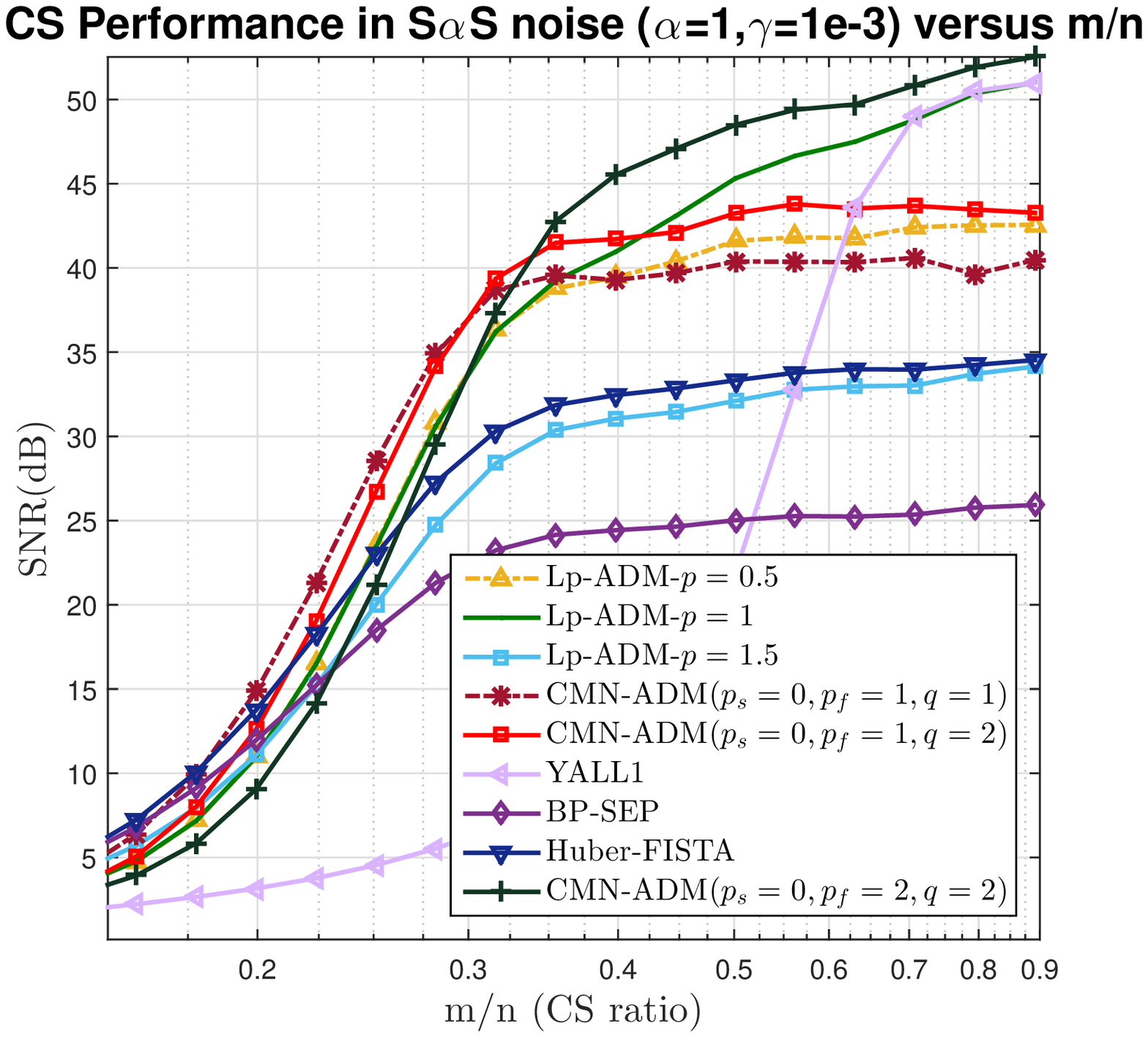}\label{fig_CS_1} }  \hspace*{-1.4em}
\subfloat[\SaS noise,  $\alpha = 1.5$, $\gamma=1e-3$]{\includegraphics[width=2.1in]{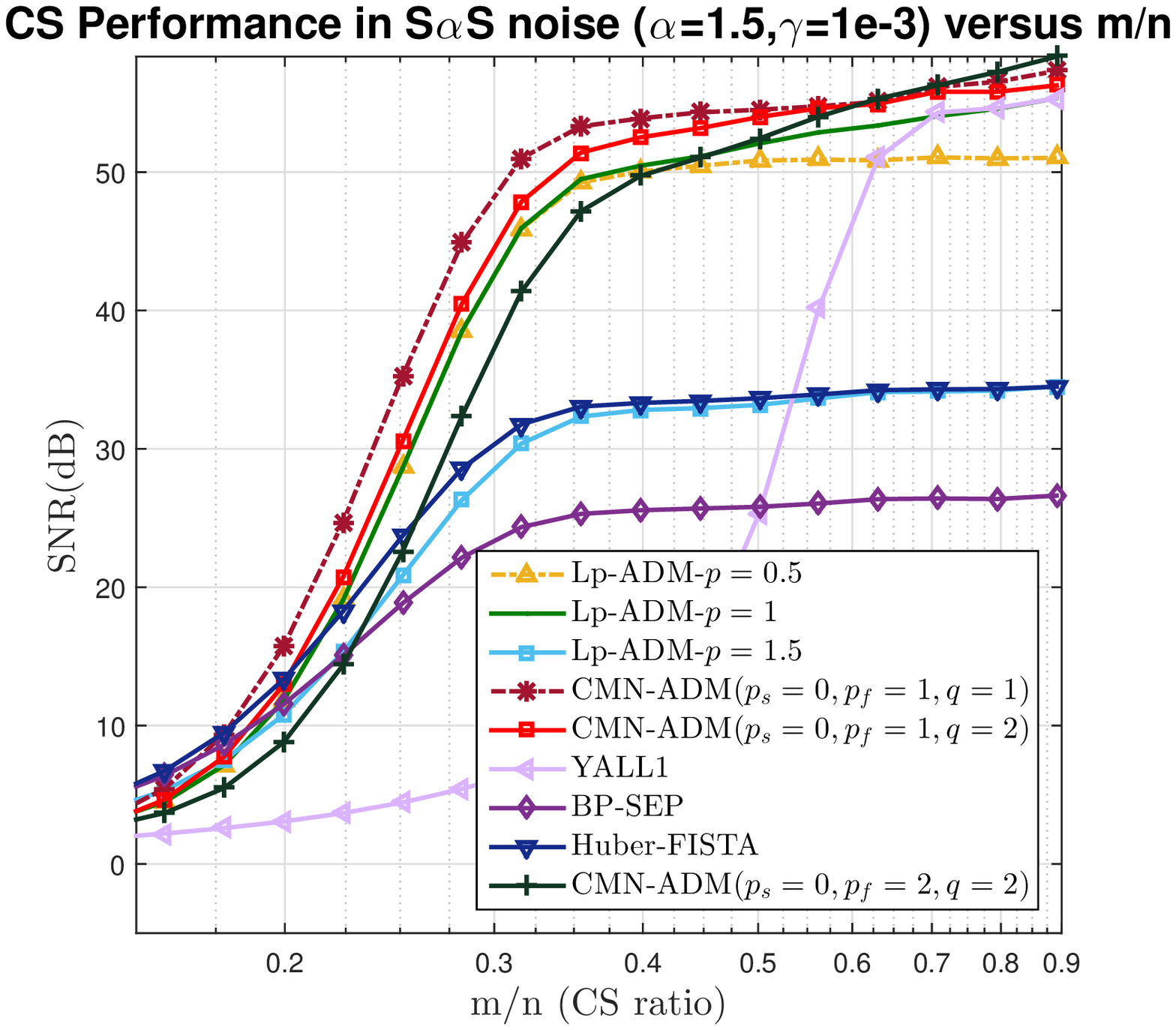}\label{fig_CS_2}  }
\caption{SNR performance of robust sparse recovery algorithms versus CS factor $m/n$ (experiment \ref{subsec:CS}). \vspace{-2ex}} 
\end{figure}

\subsubsection{The effect of CS factor}
\label{subsec:CS}
This part demonstrates the performance of the proposed algorithms in terms of the CS factor, i.e., the ratio $m/ n$ where $m$ equals the number of measurements and $n$ denotes the size of the sparse vector. Similar to previous sections (using same parameters), we generate random $8$-sparse vectors of size $n=128$, but $m$ varies from $0.1n$ to $0.9n$. The sparse signals are then corrupted with \SaS impulsive noise. We consider two cases where in the first, we let $\alpha=1$ and $\gamma = 1e-3$ and in the second, $\alpha=1.5$ and $\gamma = 1e-3$ are chosen. The SNR performance of the  sparse recovery algorithms are finally depicted versus the ratio $m/n$. Fig. \ref{fig_CS_1} and Fig. \ref{fig_CS_2} show the results for the first and the second scenario, respectively. For $\alpha=1$ the proposed algorithm with $p_s=0,p_f=2,q=2$ has the best performance while for $\alpha=1.5$ the version with $p_s=0,p_f=1,q=1$ yields more robust recovery.

%
%
\vspace*{-1ex}
\section{Conclusion}
\label{sec:conclusion}
In this paper, we explored the problem of blind sparse signal recovery in the presence of impulsive noise. We modelled the noise signal with symmetric $\alpha$-stable i.i.d.  components and we incorporated GGD to approximate the PDF of the distribution. In blind conditions, the parameters $\alpha$ and $\gamma$ of the \SaS distribution model are unknown. Treating theses parameters as unobserved latent variables, we applied EM algorithm to obtain an iterative approach to near optimal  recovery of the original signal where in each iteration, an optimization problem should be solved. Under assumptions, this problem was shown to be equivalent to a CS problem in which, a continuous mixed norm is employed as fidelity criterion. We incorporated ADMM with Majorization Minimization technique to iteratively solve the proposed CS problem. The performance of the proposed algorithm in blind CS recovery in impulsive noise, was finally examined via simulation experiments.

\end{document}